\newtheorem{lemma}{Lemma}
\newtheorem{theorem}{Theorem}
\title{On the Inapproximability of Finding Minimum Monitoring Edge-Geodetic Sets}
\author{\parbox{7cm}{\centering Davide Bilò \authorcr \texttt{davide.bilo@univaq.it}}}
\author{\parbox{7cm}{\centering Giordano Colli \authorcr \texttt{giordano.colli@student.univaq.it}} \authorcr\medskip} 
\author{\parbox{7cm}{\centering Luca Forlizzi \authorcr\texttt{luca.forlizzi@univaq.it}}}
\author{\parbox{7cm}{\centering Stefano Leucci \authorcr \texttt{stefano.leucci@univaq.it}}}
\affil{Department of Information Engineering, Computer Science and Mathematics, University of L'Aquila, Italy}
\newcommand{\p}{\ensuremath{\mathsf{P}}\xspace}
\newcommand{\np}{\ensuremath{\mathsf{NP}}\xspace}
\begin{document}

\maketitle

\begin{abstract}
    Given an undirected connected graph $G = (V(G), E(G))$ on $n$ vertices, the minimum \textsc{Monitoring Edge-Geodetic Set} (MEG-set) problem asks to find a subset $M \subseteq V(G)$ of minimum cardinality such that, for every edge $e \in E(G)$, there exist $x,y \in M$ for which all shortest paths between $x$ and $y$ in $G$ traverse $e$.

    We show that, for any constant $c < \frac{1}{2}$, no polynomial-time $(c \log n)$-approximation algorithm for the minimum MEG-set problem exists, unless $\p = \np$.
\end{abstract}

\section{Introduction}
    We study the minimum \textsc{Monitoring Edge-Geodetic Set} (MEG-set) problem. 
    Given an undirected and connected graph $G = (V(G), E(G))$ on $n$ vertices,
    we say that an edge $e \in E(G)$ is monitored by a pair $\{x,y\}$ if $e$ lies on  \emph{every} shortest path between $x$ and $y$ in $G$. Moreover, we say that a set $M \subseteq V(G)$ monitors $e$ if there exist $x,y \in M$ such that  $\{x,y\}$ monitors $e$.
    A MEG-set of $G$ is a subset $M \subset V(G)$ that monitors all edges in $E(G)$.
    The goal of the minimum MEG-set problem is that of finding a MEG-set of $G$ of minimum cardinality.
    This problem has been introduced in \cite{Monitoring_edge-geodetic_sets_in_graphs}, where the authors focus on providing upper and lower bounds to the size of minimum MEG-sets for both general graphs and special classes of graphs. Further bounds on the size of MEG-sets have been given in \cite{haslegrave, FoucaudMMSST24, TanLWL23, MaJYL24, XuYBZS24}.
    
    The minimum MEG-set problem was proven to be \np-hard by \cite{haslegrave} on general graphs, and by \cite{FoucaudMMSST24} on $3$-degenerate $2$-apex graphs. 
    However, to the best of our knowledge, no inapproximability result is currently known.

    In this paper, we show that the problem of finding a MEG-set of minimum size is not approximable within a factor of $c \ln n$, where $c < \frac{1}{2}$ is a constant of choice, unless $\p = \np$. 
    
\section{Preliminaries}

\begin{lemma}
    \label{lemma:leaf}
    Let $v$ be a vertex of degree $1$ in $G$. Vertex $v$ belongs to all MEG-sets of $G$.
\end{lemma}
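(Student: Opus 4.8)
The plan is to argue by contradiction, exploiting the fact that a degree-$1$ vertex can only appear as an endpoint of any simple path. Let $u$ denote the unique neighbor of $v$ and let $e = \{u, v\}$ be the only edge incident to $v$. Suppose, for the sake of contradiction, that $M$ is a MEG-set of $G$ with $v \notin M$. Since $M$ monitors $e$, there exist $x, y \in M$ such that every shortest path between $x$ and $y$ in $G$ traverses $e$; in particular, $x, y \neq v$.

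Next I would observe that any path traversing $e$ must visit $v$, and that a shortest path is necessarily simple, i.e.\ it repeats no vertex (otherwise one could excise a closed subwalk and obtain a strictly shorter path between the same endpoints). Because $v$ has degree $1$, its only incident edge is $e$; hence, if a simple path had $v$ as an internal vertex, it would have to both enter and leave $v$ through $e$, forcing it to repeat the vertex $u$, a contradiction. Therefore $v$ can lie on a simple path only as one of its two endpoints.

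Combining these facts yields the contradiction: since $x, y \neq v$, the vertex $v$ is neither endpoint of any shortest path between $x$ and $y$, so no such path has $v$ as an internal vertex, and consequently none of them traverses $e$. This contradicts the assumption that $\{x, y\}$ monitors $e$. Hence every MEG-set of $G$ must contain $v$.

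I expect the argument to be essentially immediate; the only point requiring care is the justification that shortest paths are simple and that a degree-$1$ vertex cannot be internal to a simple path, which is exactly where the degree hypothesis is used.
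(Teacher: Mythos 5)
Your proof is correct and follows the same approach as the paper: assume a MEG-set avoids $v$, take a pair $\{x,y\}$ monitoring the unique edge at $v$, and derive a contradiction from the fact that a degree-$1$ vertex cannot be an internal vertex of a shortest path. You simply spell out in more detail (via simplicity of shortest paths) the step the paper states in one line.
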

\begin{proof}
    Assume towards a contradiction that some MEG-set $M \subseteq V(G) \setminus \{v\}$ of $G$ exists.
    Let $\{x, y\}$ with $x,y \in M$ be a pair of vertices that monitors the unique edge incident to $v$.
    Since $\{x,y\} \cap \{v\} = \emptyset$, all shortest path from $x$ to $y$ in $G$ contain $v$ as an internal vertex, thus $v$ must have at least two incident edges, contradicting the hypothesis of the lemma.
\end{proof}

\begin{lemma}
    \label{lemma:leaf_neighbor}
    Let $u$ be a vertex of degree $1$ in $G$ and let $v$ be its sole neighbor. If $|V(G)| \ge 3$ and $M$ is a MEG-set of $G$ then $M \setminus \{v\}$ is a MEG-set of $G$.
\end{lemma}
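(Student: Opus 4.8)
The plan is to show that every edge monitored by $M$ is still monitored by $M \setminus \{v\}$, treating separately the unique edge $uv$ incident to the leaf $u$ and all remaining edges. The whole argument rests on one structural observation about the degree-$1$ vertex $u$: since $uv$ is the only edge incident to $u$, for every vertex $w \neq u$ we have $d(u,w) = d(v,w) + 1$, and the shortest paths from $u$ to $w$ are exactly the shortest paths from $v$ to $w$ prefixed by the edge $uv$. I would record two immediate consequences of this: first, the pair $\{u,w\}$ monitors $uv$ for \emph{every} $w \neq u$; second, the pair $\{u,v\}$ monitors \emph{only} the edge $uv$, since the single shortest path from $u$ to $v$ is the edge itself.

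Next I would invoke \Cref{lemma:leaf} to assert $u \in M$, hence $u \in M \setminus \{v\}$, and then verify the monitoring condition edge by edge. For an edge $e \neq uv$, pick a pair $\{x,y\} \subseteq M$ that monitors it. If $v \notin \{x,y\}$ then the same pair lies in $M \setminus \{v\}$ and we are done; otherwise say $y = v$. In this case $x \neq u$, because $\{u,v\}$ monitors only $uv \neq e$. I would then argue that $\{x,u\}$ monitors $e$: by the structural observation every shortest path from $x$ to $u$ is a shortest path from $x$ to $v$ extended by the edge $uv$, and since $\{x,v\}$ forces every shortest path from $x$ to $v$ through $e$ while $e \neq uv$, every shortest path from $x$ to $u$ also traverses $e$. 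As $x, u \in M \setminus \{v\}$, the edge $e$ remains monitored.

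The remaining case, $e = uv$, is where I expect the only genuine subtlety. By the first consequence above it suffices to exhibit some $w \in M \setminus \{v\}$ with $w \neq u$, for then $\{u,w\}$ monitors $uv$. To produce such a $w$ I would rule out $M \subseteq \{u,v\}$: since $|V(G)| \ge 3$ and $G$ is connected, $G$ contains at least one edge other than $uv$, yet any pair drawn from $\{u,v\}$ monitors at most the edge $uv$, so $M = \{u\}$ or $M = \{u,v\}$ would fail to monitor that other edge and could not be a MEG-set. Hence $M$ must contain a vertex $w \notin \{u,v\}$, which completes the argument. The delicate point to get right is precisely this existence of a third monitoring vertex, and it is exactly here that the hypothesis $|V(G)| \ge 3$, together with connectivity, is needed.
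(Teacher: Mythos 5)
Your proof is correct and follows essentially the same route as the paper's: both use \Cref{lemma:leaf} to place $u$ in $M$, replace $v$ by $u$ in any monitoring pair via the observation that shortest paths from $u$ are exactly shortest paths from $v$ prefixed by the edge $(u,v)$, and handle the edge $(u,v)$ itself by showing $M$ must contain a third vertex outside $\{u,v\}$ (the paper phrases this via $|E(G)| \ge |V(G)|-1 \ge 2$, you via connectivity and $|V(G)|\ge 3$ — the same argument). No gaps; the write-up is if anything slightly more explicit than the paper's.
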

\begin{proof}
    From \Cref{lemma:leaf} we know that $u \in M$.
    We start by observing that $\{u,v\}$ only monitors edge $(u,v)$ and, since $|E(G)| \ge |V(G)|-1 \ge 2$, there must exist some vertex $x \in M \setminus \{u,v\}$.
    
    Since $(u,v)$ is a bridge of $G$, it is traversed by all paths between $u$ and any vertex in $V(G) \setminus \{u\}$ and, in particular, it is monitored by $\{u, x\}$.

    We now turn our attention to the edges in $E \setminus \{ (v,u) \}$ and show that any such edge $e$ that is monitored by $\{v,y\}$, with $y \in M$, is also monitored by $\{u, y\}$. Notice, since $e \neq (u,v)$, we have $y \neq u$.
    Consider any shortest path $P$ from $u$ to $y$ in $G$ and observe that $P$ consists of the edge $(u,v)$ followed by a path $P'$ from $v$ to $y$. By suboptimality of shortest paths, $P'$ is a shortest path from $v$ to $y$ in $G$. Since $\{v, x\}$ monitors $e$, $P'$ contains $e$ and so does $P$.  Thus, $\{u, y\}$ monitors $e$.
\end{proof}

\section{Our Inapproximability Result}

We reduce from the \textsc{Set Cover} problem. A \textsc{Set Cover} instance $\mathcal{I}=\langle X, \mathcal{S}\rangle$ is described as a set of $\eta$ items $X = \{x_1, \dots, x_\eta\}$, and a collection $\mathcal{S} = \{ S_1, \dots, S_h\}$ of $h \ge 2$ distinct subsets of $X$, such that each subset contains at least two items and each item appears in at least two subsets.\footnote{
This can be guaranteed w.l.o.g.\ by repeatedly reducing the instance by applying the first applicable of the following two reduction rules.
Rule 1: if there exists an item $x_i$ that is contained by a single subset $S_j$, then  $S_j$ belongs to all feasible solutions, and we reduce to the instance in which both $S_j$ and $x_i$ have been removed. 
Rule 2: if there exists a subset $S_j$ that contains a single element, then (due to Rule 1) there is an optimal solution that does not contain $S_j$, and we reduce to the instance in which $S_j$ has been removed.
Notice that this process can only decrease the values of $\eta$ and $h$.} The goal is that of computing a collection $\mathcal{S}^* \subseteq \mathcal{S}$ of minimum size such that $\cup_{S_i \in \mathcal{S}^*} S_i = X$.\footnote{We assume w.l.o.g.\ that $\cup_{i=1}^h S_i = X$, i.e., that a solution exists.}

It is known that, unless $\p \neq NP$, the \textsc{Set Cover} problem is not approximable within a factor of $(1-\varepsilon) \ln |\mathcal{I}|$, where $\varepsilon > 0$ is a constant and $|\mathcal{I}|$ is the size of the \textsc{Set Cover} instance \cite{DinurS14}.

Given an instance $\mathcal{I} = \langle X, \mathcal{S} \rangle$ of \textsc{Set Cover}, we can build an associated bipartite graph $H$ whose vertex set $V(H)$ is $X \cup \mathcal{S}$ and such that $H$ contains edge $(x_i, S_j)$ if and only if $x_i \in S_j$.  We define $N=h+\eta$. Observe that $|\mathcal{I}| \geq N$.

Let $k$ be an integer parameter, whose exact value will be chosen later, that satisfies $2 \le k = O(\text{poly}(N))$. We construct a graph $G$ that contains $k$ copies $H_1,\ldots,H_k$ of $H$ as induced subgraphs. In the following, for any $\ell=1,\ldots,k$, we denote by $x_{i,\ell}$ and $S_{j,\ell}$ the vertices of $H_\ell$ corresponding to the vertices $x_i$ and $S_j$ of $H$, respectively. More precisely, we build $G$ by starting with a graph that contains exactly the $k$ copies $H_1,\ldots,H_k$ of $H$ and augmenting it as follows (see \Cref{fig:reduction}):
\begin{itemize}
    \item For each item $x_i \in X$, we add two new vertices $y_i$, $y'_i$ along with the edge $(y_i, y'_i)$ and all the edges in $\{(x_{i,\ell}, y_i) \mid \ell=1,\ldots, k\}$;
    \item We add all the edges $(y_i, y_{i'})$ for all $1 \le i \le i' \le \eta$, so that the subgraph induced by $y_1, \dots, y_\eta$ is complete.
    \item For each set $S_j \in \mathcal{S}$, we add two new vertices $z_j$, $z'_j$ along with the edge $(z_j, z'_j)$, and all the edges in $\{(S_{j,\ell}, z_j) \mid \ell=1,\ldots, k\}$.
\end{itemize}
Observe that the number $n$ of vertices of $G$ satisfies $n= 2h + 2\eta + k(\eta + h) = (k+2)(\eta + h) = (k+2)N$.

\begin{figure}
    \centering
    \includegraphics{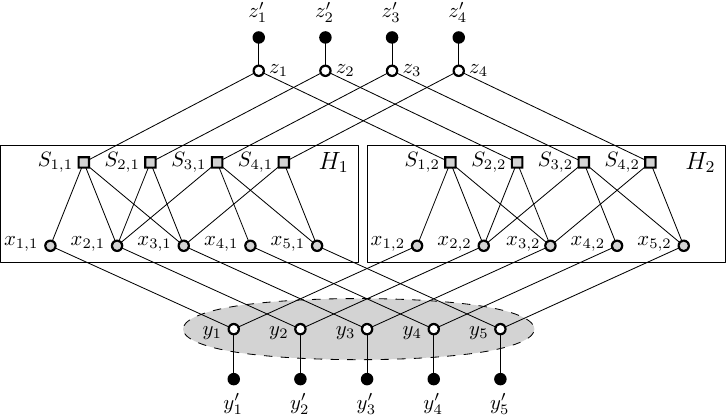}
    \caption{The graph $G$ obtained by applying our reduction with $k=2$ to the \textsc{Set Cover} instance $\mathcal{I} = \langle X, \mathcal{S}\rangle$ with $\eta=5$, $h=4$, $S_1 = \{x_1, x_2, x_3\}$, $S_2 = \{x_2, x_3\}$, $S_3 = \{x_2, x_4, x_5\}$,  and $S_4 = \{x_3, x_5\}$. To reduce clutter, the edges of the clique induced by the vertices $y_i$ (in the gray area) are not shown.}
    \label{fig:reduction}
\end{figure}

Let $Y = \{y_i \mid  i=1, \dots, \eta \}$, $Y' = \{y'_i \mid  i=1, \dots, \eta \}$, $Z = \{z_j \mid  i=1, \dots, h \}$, and
$Z' = \{z_j \mid  i=1, \dots, h \}$.
Moreover, define $L$ as the set of all vertices of degree $1$ in $G$, i.e., $L = Y' \cup Z'$. By \Cref{lemma:leaf}, the vertices in $L$ belong to all MEG-sets of $G$.

\begin{lemma}
    \label{lemma:edges_in_G_minus_H_monitored}
    $L$ monitors all edges having both endvertices $Y \cup Y' \cup Z \cup Z'$.
\end{lemma}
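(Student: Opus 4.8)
The plan is to first determine exactly which edges of $G$ have both endvertices in $Y \cup Y' \cup Z \cup Z'$, and then exhibit, for each such edge, a monitoring pair drawn entirely from $L = Y' \cup Z'$. To enumerate these edges, I would recall the construction: each $y'_i \in Y'$ is a leaf whose only neighbor is $y_i$, each $z'_j \in Z'$ is a leaf whose only neighbor is $z_j$, the vertices of $Y$ induce a clique, and there are neither edges among the vertices of $Z$ nor any edge between a $Y$-vertex and a $Z$-vertex (all connections between the two groups pass through the copies $H_1, \dots, H_k$). Consequently, the edges with both endvertices in $Y \cup Y' \cup Z \cup Z'$ are precisely the pendant edges $(y_i, y'_i)$, the pendant edges $(z_j, z'_j)$, and the clique edges $(y_i, y_{i'})$ with $i \neq i'$.

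For the pendant edges I would use a bridge argument. Since $y'_i$ has degree $1$, the edge $(y_i, y'_i)$ is a bridge, so every shortest path from $y'_i$ to any vertex other than $y'_i$ must cross it. Because $|V(G)| \ge 3$ forces $|L| = \eta + h \ge 2$, I can pick any $w \in L \setminus \{y'_i\}$ and conclude that $\{y'_i, w\}$ monitors $(y_i, y'_i)$; the identical argument with $z'_j$ in place of $y'_i$ handles the edges $(z_j, z'_j)$.

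The clique edges $(y_i, y_{i'})$ are the substantive case, and here I would use the pair $\{y'_i, y'_{i'}\} \subseteq L$. Because $y'_i$ and $y'_{i'}$ are leaves with sole neighbors $y_i$ and $y_{i'}$, every $y'_i$--$y'_{i'}$ path necessarily begins with the edge $(y'_i, y_i)$, ends with the edge $(y_{i'}, y'_{i'})$, and is completed by a $y_i$--$y_{i'}$ path in between. Since the vertices of $Y$ form a clique, $d(y_i, y_{i'}) = 1$, and in a simple graph this distance is realized only by the single edge $(y_i, y_{i'})$; hence the unique shortest path between $y'_i$ and $y'_{i'}$ is $y'_i, y_i, y_{i'}, y'_{i'}$ of length $3$, which traverses $(y_i, y_{i'})$, so $\{y'_i, y'_{i'}\}$ monitors this edge. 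The only point requiring care, and thus the main obstacle, is establishing the \emph{uniqueness} of this shortest path — i.e.\ that no alternative $y_i$--$y_{i'}$ route of length $1$ exists; but this is immediate from the clique structure and the simplicity of $G$, so I expect no real difficulty. Combining the three cases yields that $L$ monitors every edge with both endvertices in $Y \cup Y' \cup Z \cup Z'$.
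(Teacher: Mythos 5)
Your proof is correct and follows essentially the same approach as the paper: a leaf/bridge argument showing every shortest path leaving $y'_i$ (resp.\ $z'_j$) uses its pendant edge, plus the observation that the unique shortest $y'_i$--$y'_{i'}$ path is $\langle y'_i, y_i, y_{i'}, y'_{i'} \rangle$, so $\{y'_i, y'_{i'}\}$ monitors each clique edge. The only difference is that you spell out the edge enumeration and the uniqueness argument in more detail than the paper does, which is fine.
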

\begin{proof}
    Observe that all shortest paths from $y'_i \in L$ (resp.\ $z'_j \in L$) to any other vertex  $v \in L \setminus \{y'_i\}$ (resp.\ $v \in L \setminus \{ z'_j \}$) must traverse the sole edge incident to $y'_i$ (resp.\ $z'_i$), namely $(y'_i, y_i)$ (resp.\ $z'_j, z_j$).
    Since $|L| \ge 2$, such a $v$ always exists, and all edges incident to $L$ are monitored by $L$.

    The only remaining edges are those with both endpoints in $Y$. Let $(y_i, y_{i'})$ be such an edge. Since the only shortest path between $y'_i$ and $y'_{i'}$ in $G$ is $\langle y'_i, y_i, y_{i'}, y'_{i'} \rangle$, the pair $\{ y'_i, y'_{i'} \}$ monitors $(y_i, y_{i'})$.
\end{proof}

\begin{lemma}\label{lemma:set_cover_induces_meg_set}
    Let $\mathcal{S}_1,\ldots,\mathcal{S}_k$ be $k$ (not necessarily distinct) set covers of $\mathcal{I}$. The set $M = L \cup \{S_{j,\ell} \mid S_j \in \mathcal{S}_\ell, 1 \leq \ell \leq k\}$ is a MEG-set of $G$. 
\end{lemma}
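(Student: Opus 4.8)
The plan is to show that $M$ monitors every edge of $G$, arguing by cases on where the edge lies. The edges with both endpoints in $Y \cup Y' \cup Z \cup Z'$ are already handled: since $L \subseteq M$, \Cref{lemma:edges_in_G_minus_H_monitored} shows they are monitored. What remains are the edges incident to the copies, which split into the \emph{attachment} edges $(x_{i,\ell}, y_i)$ and $(S_{j,\ell}, z_j)$ and the \emph{bipartite} edges $(x_{i,\ell}, S_{j,\ell})$ internal to each $H_\ell$. The degree-$1$ vertices $y'_i$ and $z'_j$ (which are in $M$ by \Cref{lemma:leaf}) will serve as anchors that force shortest paths to pass through a prescribed copy.

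The technical core consists of two shortest-path facts. First, whenever $x_i \in S_j$, I would show that $\{S_{j,\ell}, y'_i\}$ admits the unique shortest path $\langle S_{j,\ell}, x_{i,\ell}, y_i, y'_i\rangle$: the leaf $y'_i$ forces the path through $y_i$, and $x_{i,\ell}$ is the only neighbour of $y_i$ at distance $2$ from $S_{j,\ell}$, while $S_{j,\ell}$ and $y_i$ are non-adjacent, so no shorter path exists. Second, for any $S_{j'} \in \mathcal{S}_\ell$ I would show that every shortest path of $\{z'_j, S_{j',\ell}\}$ has the form $\langle z'_j, z_j, S_{j,\ell}, x_{m,\ell}, S_{j',\ell}\rangle$ with $x_m \in S_j \cap S_{j'}$: since the only neighbours of $z_j$ are the vertices $S_{j,\cdot}$, and since reaching the copy-$\ell$ vertex $S_{j',\ell}$ within the distance budget forbids changing copies through a hub, the second vertex is forced to be $S_{j,\ell}$, so the edge $(z_j, S_{j,\ell})$ lies on every shortest path. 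These two facts, together with the cover property, immediately dispatch the attachment edges: since $\mathcal{S}_\ell$ covers $x_i$, some $S_j \in \mathcal{S}_\ell$ has $x_i \in S_j$, so $S_{j,\ell} \in M$ and $\{S_{j,\ell}, y'_i\}$ monitors $(x_{i,\ell}, y_i)$ (and, as a bonus, the bipartite edge $(x_{i,\ell}, S_{j,\ell})$); while for $(S_{j,\ell}, z_j)$ I would pick any $x_i \in S_j$ (possible as $|S_j|\ge 2$) and a cover set $S_{j'} \in \mathcal{S}_\ell$ containing $x_i$, so that $\{z'_j, S_{j',\ell}\}$ monitors $(S_{j,\ell}, z_j)$.

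The delicate step, which I expect to be the main obstacle, is monitoring the bipartite edges $(x_{i,\ell}, S_{j,\ell})$ with $S_j \notin \mathcal{S}_\ell$ (those with $S_j \in \mathcal{S}_\ell$ are already covered above). Here $S_{j,\ell} \notin M$, so the pair $\{S_{j,\ell}, y'_i\}$ is unavailable; moreover, because the $k \ge 2$ copies are mutually isomorphic and joined only through the hubs $y_i$ and $z_j$, any pair of copy-agnostic leaves leaves the traversal free to route through a different copy, and hence cannot force a path through a \emph{specific} copy's internal edge. The natural candidate is again $\{z'_j, S_{j',\ell}\}$ for a cover set $S_{j'} \in \mathcal{S}_\ell$ with $x_i \in S_{j'}$: its shortest paths enter copy $\ell$ through $S_{j,\ell}$ and cross to $S_{j',\ell}$ via some $x_{m,\ell}$ with $x_m \in S_j \cap S_{j'}$. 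For this to pin down the target edge $(x_{i,\ell}, S_{j,\ell})$ I need the crossing vertex to be forced to be $x_{i,\ell}$, i.e.\ a representative $S_{j'} \in \mathcal{S}_\ell$ with $S_j \cap S_{j'} = \{x_i\}$. Guaranteeing that such a representative always exists — or, failing that, exhibiting an alternative monitoring pair for these edges — is precisely the crux of the argument, and is where the structure of the reduction together with the preprocessing assumption (every item lies in at least two sets, every set has at least two items) must be exploited carefully.
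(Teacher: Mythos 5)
Your proposal, in the cases it completes, coincides with the paper's own proof: edges with both endpoints in $Y \cup Y' \cup Z \cup Z'$ are dispatched by \Cref{lemma:edges_in_G_minus_H_monitored}; for $S_j \in \mathcal{S}_\ell$ the pairs $\{S_{j,\ell}, y'_i\}$ and $\{z'_j, S_{j,\ell}\}$ handle all edges incident to $S_{j,\ell}$ together with the attachment edges $(x_{i,\ell},y_i)$; and for $S_j \notin \mathcal{S}_\ell$ the pair $\{z'_j, S_{j',\ell}\}$, with $S_{j'} \in \mathcal{S}_\ell$ meeting $S_j$, handles $(S_{j,\ell}, z_j)$. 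Your two ``shortest-path facts'' are exactly the uniqueness arguments the paper leaves implicit, and they are correct. What you leave open --- monitoring the bipartite edges $(x_{i,\ell}, S_{j,\ell})$ with $S_j \notin \mathcal{S}_\ell$ --- is therefore a genuine gap in your write-up: you note that $\{z'_j, S_{j',\ell}\}$ monitors such an edge only when $S_j \cap S_{j'} = \{x_i\}$, and you never exhibit a pair that works when no such representative exists in $\mathcal{S}_\ell$.

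You should know, however, that this gap is not yours to close: you have put your finger on an error in the paper. The paper handles this case by asserting that $\{S_{k,\ell}, z'_j\}$, for any $S_k \in \mathcal{S}_\ell$ containing $x_i$, monitors $(S_{j,\ell}, x_{i,\ell})$; exactly as you observe, this fails whenever some $x_m \in (S_k \cap S_j)\setminus\{x_i\}$ exists, since $\langle S_{k,\ell}, x_{m,\ell}, S_{j,\ell}, z_j, z'_j \rangle$ is then a second shortest path avoiding the edge. Worse, your suspicion that no alternative pair need exist is realized, so the lemma as stated is false. Take $X = \{x_1,x_2,x_3\}$, $S_1 = X$, $S_2 = \{x_1,x_2\}$, $S_3=\{x_2,x_3\}$ (all standing assumptions hold: the sets are distinct, each has at least two items, each item lies in at least two sets), and let $\mathcal{S}_\ell = \{S_1\}$ for every $\ell$, so that $M = L \cup \{S_{1,1},\dots,S_{1,k}\}$. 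Then no pair of $M$ monitors the edge $(x_{1,\ell}, S_{2,\ell})$: pairs with both vertices outside $V(H_\ell)$ cannot monitor it, by the copy-swap argument used in the proof of \Cref{lemma:what_monitors_minimal_meg_set}; the only vertex of $M$ inside $H_\ell$ is $S_{1,\ell}$; and every pair $\{S_{1,\ell}, w\}$ with $w \in M$ admits a shortest path avoiding the edge --- for $w = z'_2$ the shortest paths are $\langle S_{1,\ell}, x_{1,\ell}, S_{2,\ell}, z_2, z'_2\rangle$ and $\langle S_{1,\ell}, x_{2,\ell}, S_{2,\ell}, z_2, z'_2\rangle$, and for every other choice of $w$ no shortest path uses $(x_{1,\ell},S_{2,\ell})$ at all. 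So no completion of your argument (or of the paper's) can succeed for the statement as written; the construction or the statement itself must be modified, and this affects the later results (\Cref{lemma:meg_set_contains_set_cover} and the final theorem) that rely on this lemma.
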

\begin{proof}
    Since $L \subseteq M$, by \Cref{lemma:edges_in_G_minus_H_monitored}, we only need to argue about edges with at least one endvertex in some $H_\ell$, with $1\leq \ell\leq k$.
    Let $S_j \in \mathcal{S}_\ell$, and consider any $x_i \in S_j$. Edge $(S_{j,\ell},z_j)$ is monitored by $\{z'_j, S_{j,\ell}\}$. Edges $(S_{j,\ell}, x_{i,\ell})$ and $(x_{i,\ell}, y_i)$ are monitored by $\{S_{j,\ell}, y'_i\}$.

    The only remaining edges with at least one endvertex in $H_\ell$ are those incident to vertices $S_{j,\ell}$ with $S_j \in \mathcal{S} \setminus \mathcal{S}_\ell$. Consider any such $S_j$, let $x_i$ be an item in $S_j$, and let $S_k \in \mathcal{S}_\ell$ be any set such that $x_i \in S_k$ (notice that both $x_i$ and $S_k$ exist since sets are non-empty and $\mathcal{S}_\ell$ is a set cover).
    Edge $(S_{j,\ell}, x_{i,\ell})$ is monitored by $\{S_{k,\ell}, z'_j\}$, which also monitors $(S_{j,\ell}, z_j)$. 
\end{proof}

We say that a MEG-set $M$ is \emph{minimal} if, for every $v \in M$, $M \setminus \{v\}$ is not a MEG-set. \Cref{lemma:leaf_neighbor} ensures that any minimal MEG-set $M$ does not contain any of the vertices $y_i$, for $i=1,\dots,\eta$, or $z_j$ for $j=1,\dots,h$.
Hence, $M \setminus L$ contains only vertices in $\bigcup_{\ell=1}^k V(H_\ell)$.

\begin{lemma}\label{lemma:what_monitors_minimal_meg_set}
    Let $M$ be a minimal MEG-set of $G$. For every $i=1,\ldots, \eta$ and every $\ell=1,\ldots,k$, $M$ contains at least one among $x_{i,\ell}$ and all $S_{j,\ell}$ such that $x_i \in S_j$.
\end{lemma}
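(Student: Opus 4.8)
The plan is to fix an item index $i$ and a copy index $\ell$, concentrate on the single edge $e = (x_{i,\ell}, y_i)$, and show that \emph{every} pair of vertices that monitors $e$ contains at least one vertex of the target set $T := \{x_{i,\ell}\} \cup \{S_{j,\ell} \mid x_i \in S_j\}$, which is precisely $x_{i,\ell}$ together with its neighbours inside $H_\ell$. Since $M$ is a MEG-set it monitors $e$, so it contains a pair $\{a,b\}$ that monitors $e$; the above statement then forces $\{a,b\}\cap T\neq\emptyset$ and hence $M \cap T \neq \emptyset$, which is exactly the assertion of the lemma. In this way the whole statement reduces to the claim: if $a,b \notin T$, then $\{a,b\}$ does not monitor $e$.

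To prove this claim I would show that for any shortest $a$--$b$ path $P$ in $G$ that happens to traverse $e$, there is an equally short $a$--$b$ walk avoiding $e$; this gives $d_{G-e}(a,b) = d_G(a,b)$, i.e.\ not all shortest $a$--$b$ paths use $e$. First I locate the crossing. Because $x_{i,\ell}\in T$ is not an endpoint, $P$ contains three consecutive vertices $y_i,\,x_{i,\ell},\,w$, where $w\neq y_i$ is a neighbour of $x_{i,\ell}$; the only such neighbours are the vertices $S_{j,\ell}$ with $x_i\in S_j$, so $w=S_{j,\ell}\in T$. As $S_{j,\ell}$ is also not an endpoint, $P$ continues to a further vertex $u$, yielding the subpath $y_i,\,x_{i,\ell},\,S_{j,\ell},\,u$. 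The neighbours of $S_{j,\ell}$ are $z_j$ and the vertices $x_{i',\ell}$ with $x_{i'}\in S_j$; since $P$ is simple and already contains $x_{i,\ell}$, either $u=z_j$ or $u=x_{i',\ell}$ with $i'\neq i$.

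The decisive step is to rule out $u=x_{i',\ell}$ and then reroute. If $u=x_{i',\ell}$ with $i'\neq i$, the contiguous subpath $x_{i',\ell},\,S_{j,\ell},\,x_{i,\ell},\,y_i$ of $P$ would be a shortest path of length $3$ between $x_{i',\ell}$ and $y_i$; but the clique on $\{y_1,\dots,y_\eta\}$ provides the length-$2$ route $x_{i',\ell},\,y_{i'},\,y_i$, contradicting optimality of $P$. Hence $u=z_j$, and I replace the segment $z_j,\,S_{j,\ell},\,x_{i,\ell},\,y_i$ by $z_j,\,S_{j,\ell'},\,x_{i,\ell'},\,y_i$ for an arbitrary other copy $\ell'\neq\ell$, which exists exactly because $k\ge 2$. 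This substitution is valid: $x_i\in S_j$ guarantees the edges $(z_j,S_{j,\ell'})$, $(S_{j,\ell'},x_{i,\ell'})$, $(x_{i,\ell'},y_i)$ in $H_{\ell'}$; it keeps the endpoints $z_j$ and $y_i$ of the segment fixed, preserves the total length, and avoids $e$. Therefore $d_{G-e}(a,b)\le d_G(a,b)$, establishing the claim.

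I expect the main obstacle to be precisely this local analysis of how a shortest path may cross $e$: one must verify that a shortest path can reach $x_{i,\ell}$ from the ``set side'' only through a terminal vertex $z_j$, the competing approach via $x_{i',\ell}$ being short-circuited by the $y$-clique, since it is exactly this structure that makes the copy-swapping reroute available. The essential role of having $k\ge 2$ copies should be emphasised here, as it is what prevents any vertex outside $T$ from monitoring $e$.
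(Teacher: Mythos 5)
Your proof is correct, and it takes a genuinely different route from the paper's. The paper first invokes minimality (via \Cref{lemma:leaf_neighbor}) to ensure the monitoring pair avoids all $y_{i'}$ and $z_{j'}$ vertices, then handles pairs lying entirely outside $V(H_\ell)$ by a global copy-swapping argument, and finally disposes of pairs with at least one endpoint in $V(H_\ell)$ through a five-case analysis (with roughly two dozen sub-cases), exhibiting in each an explicit shortest path between the two endpoints that avoids $(x_{i,\ell},y_i)$. You instead work locally at the edge itself: any shortest path crossing $e=(x_{i,\ell},y_i)$ with both endpoints outside $T=\{x_{i,\ell}\}\cup\{S_{j,\ell}\mid x_i\in S_j\}$ must continue past $x_{i,\ell}$ into some $S_{j,\ell}$ and then past $S_{j,\ell}$; the continuation into another item-vertex $x_{i',\ell}$ is killed by the length-$2$ shortcut $\langle x_{i',\ell},y_{i'},y_i\rangle$ through the $y$-clique, so the path must continue to $z_j$, at which point the $3$-edge segment $\langle y_i,x_{i,\ell},S_{j,\ell},z_j\rangle$ can be rerouted through copy $H_{\ell'}$, $\ell'\neq\ell$. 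This buys you two things: the argument is far shorter, replacing the exhaustive enumeration of endpoint types by a single structural analysis of how a shortest path can cross $e$; and it proves a strictly stronger statement, namely that \emph{no} pair of vertices disjoint from $T$ monitors $e$, regardless of whether those vertices are item-vertices, set-vertices, clique vertices $y_{i'}$, hub vertices $z_{j'}$, or leaves --- so minimality of $M$ is not needed at all (the paper needs it precisely because its case analysis omits pairs containing $y_{i'}$ or $z_{j'}$ vertices). The paper's approach, in exchange, is fully explicit and self-contained, requiring only the verification that each exhibited path is shortest, whereas yours requires the (correct, but slightly more delicate) observations that a subpath of a shortest path is shortest and that an equally long walk avoiding $e$ yields a shortest path avoiding $e$. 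Both arguments use $k\ge 2$ in the same essential way.
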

\begin{proof}
    Let $S(x_{i,\ell})$ be the set of all $S_{j,\ell}$ such that $x_i \in S_j$.  
    We show that no pair $\{u,v\}$ with $u,v \in M \setminus ( \{x_{i,\ell}\} \cup S(x_{i,\ell}) )$ can monitor the edge $(x_{i,\ell}, y_i)$. 

    Notice that, since $M$ is minimal, $M$ does not contain any vertex $y_{i'}$, for $i'=1, \dots, \eta$, nor any vertex $z_{j'}$ for $j'=1, \dots, h$.

    Let $\lambda \in \{1, \dots, h\} \setminus \{ \ell \}$ (such a $\lambda$ always exists since $h \ge 2$), and notice that any path $P$ in $G$ that has both endvertices in $V(G) \setminus V(H_\ell)$  can always be transformed into a walk $P'$ with as many edges as $P$ by replacing any occurrence of a vertex from copy $H_\ell$ of $H$ with the occurrence of the corresponding vertex from copy $H_{\lambda}$.  Since $P'$ does not contain $x_{i, \ell}$, this implies that no pair $\{u,v\}$ with $u,v \in V(G) \setminus V(H_\ell)$ can monitor $(x_{i,\ell}, y_i)$.

    The above discussion allows us to restrict ourselves to the situation in which at least one vertex of $\{u,v\}$ belongs to $V(H_\ell)$. 
    We sat that a vertex of $G$ is an \emph{item-vertex} if it is some $x_{i', \ell'}$ with $i'=1,\dots, \eta$ and $\ell' = 1, \dots, k$. Analogously, a vertex of $G$ is a \emph{set-vertex} if it is some $S_{j', \ell'}$ with $j'=1,\dots, h$ and $\ell' = 1, \dots, k$.
    We consider the following cases and, for each of them we exhibit a shortest path $\widetilde{P}$ from $u$ to $v$ in $G$ that does not traverse $(x_{i,\ell}, y_i)$:

    \begin{description}
        \item[Case 1: $u$ is an item-vertex in $H_\ell$ and $v$ is a set-vertex.] In this case $u = x_{i', \ell}$ for some $i' \neq i$, and we distinguish the following sub-cases:
        \begin{itemize}
            \item If $v = S_{j', \ell}$ and $x_{i'} \in S_{j'}$, then $\widetilde{P}$ consists of the edge $(x_{i', \ell}, S_{j', \ell})$.

            \item If $v = S_{j', \ell}$, $x_{i'} \not\in S_{j'}$, and there exists some 
            $S_{j''}$ such that $x_{i'} \in S_{j''}$ and $S_{j'} \cap S_{j''} \neq \emptyset $, then let $x_{i''} \in S_{j'} \cap S_{j''}$. 
            We choose $\widetilde{P} = \langle  x_{i', \ell}, S_{j'', \ell}, x_{i'', \ell}, S_{j', \ell} \rangle$.
            
            \item If $v = S_{j', \ell}$, $x_{i'} \not\in S_{j'}$, and there exists no 
            $S_{j''}$ such that $x_{i'} \in S_{j''}$ and $S_{j'} \cap S_{j''} \neq \emptyset$, then   
            let $x_{i''} \in S_{j'}$ (notice that $x_i \not\in S_{j'}$), and choose  $\widetilde{P} = \langle x_{i', \ell}, y_{i'}, y_{i''}, x_{i'', \ell}, S_{j', \ell} \rangle$.
            
            \item If $v = S_{j', \ell'}$, $x_{i'} \in S_{j'}$, and $\ell' \neq \ell$, then choose $\widetilde{P} = \langle x_{i', \ell}, S_{j', \ell}, z_{j'}, S_{j', \ell'}  \rangle$.
            
            \item If $v = S_{j', \ell'}$, $x_{i'} \not\in S_{j'}$, and $\ell' \neq \ell$, then let $x_{i''} \in S_{j'}$ (notice that $x_i \not\in S_{j'}$) and choose $\widetilde{P} = \langle  x_{i', \ell}, y_{i'}, y_{i''}, x_{i'', \ell'}, S_{j', \ell'} \rangle$.
         \end{itemize}
         
         \item[Case 2: $u$ is an item-vertex in $H_\ell$ and $v$ is an item-vertex.]
          In this case $u = x_{i', \ell}$ for some $i' \neq i$, and we distinguish the following sub-cases:
         \begin{itemize}
             \item If $v = x_{i'', \ell}$ with $i'' \neq i$ and there exists some set $S_{j'}$ such that $x_{i'}, x_{i''} \in S_{j'}$, then choose $\widetilde{P} = \langle  x_{i', \ell}, S_{j', \ell}, x_{i'', \ell} \rangle$.
             \item If $v=x_{i'', \ell}$ with $i'' \neq i$ and there is no set  $S_{j'}$ such that $x_{i'}, x_{i''} \in S_{j'}$, then choose $\widetilde{P} = \langle  x_{i', \ell}, y_{i'}, y_{i''}, x_{i'', \ell} \rangle$.
             \item If $v=x_{i', \ell''}$ with $\ell'' \neq \ell$, then choose $\widetilde{P} = \langle  x_{i', \ell}, y_{i'}, x_{i', \ell''} \rangle$.
             \item If $v = x_{i'', \ell''}$ with $i'' \neq i'$ and $\ell'' \neq \ell$, then choose $\widetilde{P} = \langle x_{i', \ell}, y_{i'}, y_{i''}, x_{i'', \ell''} \rangle$.  
         \end{itemize}
         
        \item[Case 3: $u$ is a set-vertex in $H_\ell$ and $v$ is a set-vertex.] In this case $u = S_{j', \ell}$ for some $j'$ such that $x_{i, \ell} \not\in  S_{j'}$, and we distinguish the following sub-cases:
        \begin{itemize}
            \item If $v = S_{j'', \ell}$ and there exists some $x_{i'}$ such that $x_{i'} \in S_{j'} \cap S_{j''}$, then $i' \neq i$ and we pick $\widetilde{P} = \langle S_{j', \ell}, x_{i', \ell}, S_{j'', \ell} \rangle$. 
            
            \item If $v = S_{j'', \ell}$, $S_{j'} \cap S_{j''} = \emptyset$, and there exists some $S_{j'''}$ such that $S_{j'} \cap S_{j'''} \neq \emptyset$ and  $S_{j''} \cap S_{j'''} \neq \emptyset$ then choose any 
            $x_{i'} \in S_{j'} \cap S_{j'''}$ and any $x_{i''} \in S_{j''} \cap S_{j'''}$. We pick $\widetilde{P} = \langle S_{j', \ell}, x_{i', \ell}, S_{j''', \ell}, x_{i'', \ell}, S_{j'', \ell} \rangle$.
            
            \item If $v = S_{j'', \ell}$, $S_{j'} \cap S_{j''} = \emptyset$, and there is no $S_{j'''}$ such that $S_{j'} \cap S_{j'''} \neq \emptyset$ and  $S_{j''} \cap S_{j'''} \neq \emptyset$, then choose $x_{i'} \in S_{j'} \setminus \{x_i\}$, $x_{i''} \in S_{j''} \setminus \{x_i\}$, and pick
            $\widetilde{P} = \langle S_{j', \ell}, x_{i', \ell}, y_{i'}, y_{i''} , x_{i'', \ell}, S_{j'', \ell} \rangle$.
            
            \item If $v = S_{j', \ell'}$ with $\ell' \neq \ell$, then we pick $\widetilde{P} = \langle S_{j', \ell}, z_{j'}, S_{j', \ell'} \rangle$. 

            \item If $v = S_{j'', \ell'}$ with $j'' \neq j'$ and $\ell' \neq \ell$, and there exists some item $x_{i'} \in S_{j'} \cap S_{j''}$, then we choose $\widetilde{P} = \langle S_{j', \ell}, z_{j'}, S_{j', \ell'}, x_{i', \ell'}, S_{j'', \ell'} \rangle$. 
            
            \item If $v = S_{j'', \ell'}$ with $j'' \neq j'$ and $\ell' \neq \ell$, and $S_{j'} \cap S_{j''} = \emptyset$, then let $x_{i'} \in S_{j'} \setminus \{x_i\}$, $x_{i''} \in S_{j''} \setminus \{x_i\}$. We pick $\widetilde{P} = \langle S_{j', \ell}, x_{i', \ell}, y_{i'}, y_{i''}, x_{i'', \ell'}, S_{j'', \ell'} \rangle$. 
        \end{itemize}
        
        \item[Case 4: $u$ is an item-vertex in $H_\ell$ and $v \in L$.]  In this case $u=x_{i', \ell}$ for some $i' \neq i$, and we distinguish the following sub-cases:
        \begin{itemize}
            \item If $v = y'_{i'}$ then  $\widetilde{P} = \langle x_{i', \ell}, y_{i'}, y'_{i'} \rangle$.
            
            \item If $v = y'_{i''}$ with $i'' \neq i'$, then  $\widetilde{P} = \langle x_{i', \ell}, y_{i'}, y_{i''}, y'_{i''} \rangle$.
            
            \item If $v = z'_{j'}$ and $x_{i'} \in S_{j'}$, then pick $\widetilde{P} = \langle x_{i', \ell}, S_{j', \ell}, z_{j'}, z'_{j'}  \rangle$.
            
            \item If $v = z'_{j'}$, $x_{i'} \not\in S_{j'}$, and there exists some $S_{j''}$ for which $S_{j'} \cap S_{j''} \neq \emptyset$, then let $x_{i''} \in S_{j'} \cap S_{j''}$ and choose $\widetilde{P} = \langle x_{i', \ell}, S_{j'', \ell}, x_{i'', \ell}, S_{j', \ell}, z_{j'}, z'_{j'}  \rangle$.
            
             \item If $v = z'_{j'}$, $x_{i'} \not\in S_{j'}$ and there is no $S_{j''}$ for which $S_{j'} \cap S_{j''} \neq \emptyset$, then let $x_{i''} \in S_{j'} \setminus \{x_i\}$. We choose $\widetilde{P} = \langle x_{i', \ell}, y_{i'}, y_{i''}, x_{i'', \ell}, S_{j', \ell}, z_{j'}, z'_{j'} \rangle$.
        \end{itemize}

        \item[Case 5: $u$ is a set-vertex in $H_\ell$ and $v \in L$.] In this case $u=S_{j', \ell}$ for some $j'$ such that $x_{i, \ell} \not\in S_{j'}$, and we distinguish the following sub-cases:
        \begin{itemize}
            \item If $v = y'_{i'}$ with $x_{i'} \in S_{j'}$, then $i' \neq i$ and we choose $\widetilde{P} = \langle  S_{j', \ell}, x_{i', \ell}, y_{i'}, y'_{i'} \rangle$.
            
            \item If $v = y'_{i'}$ with $x_{i'} \not\in S_{j'}$, then let $x_{i''} \in S_{j'}$ and choose  $\widetilde{P} = \langle  S_{j', \ell}, x_{i'', \ell}, y_{i''}, y_{i'}, y'_{i'} \rangle$.

            \item If $v = z'_{j'}$, then  $\widetilde{P} = \langle S_{j', \ell}, z_{j'}, z'_{j'} \rangle$.

            \item If $v = z'_{j''}$ with $j'' \neq j'$ and $S_{j'} \cap S_{j''} \neq \emptyset$, then let $x_{i'} \in S_{j'} \cap S_{j''}$ and choose  $\widetilde{P} = \langle S_{j', \ell}, x_{i', \ell}, S_{j'', \ell}, z_{j''}, z'_{j''} \rangle$.

            \item If $v = z'_{j''}$ with $j'' \neq j'$,  $S_{j'} \cap S_{j''} = \emptyset$, and there exists some $S_{j'''}$ such that $S_{j'} \cap S_{j'''} \neq \emptyset$ and $S_{j''} \cap S_{j'''} \neq \emptyset$, then let $x_{i'} \in S_{j'} \cap S_{j'''}$ and $x_{i''} \in S_{j''} \cap S_{j'''}$. We choose $\widetilde{P} = \langle  S_{j', \ell}, x_{i', \ell}, S_{j''', \ell}, x_{i'', \ell}, S_{j'', \ell}, z_{j''}, z'_{j''} \rangle$.

            \item If $v = z'_{j''}$ with $j'' \neq j'$,  $S_{j'} \cap S_{j''} = \emptyset$, and there is no  $S_{j'''}$ such that $S_{j'} \cap S_{j'''} \neq \emptyset$ and $S_{j''} \cap S_{j'''} \neq \emptyset$, then let $x_{i'} \in S_{j'}$ and $x_{i''} \in S_{j''} \setminus \{x_i\}$ and  choose  $\widetilde{P} = \langle  S_{j', \ell}, x_{i', \ell},  y_{i'}, y_{i''}, x_{i'', \ell}, S_{j'', \ell}, z_{j''}, z'_{j''} \rangle$.
        \end{itemize}
    \end{description}
\end{proof}

\begin{lemma}\label{lemma:meg_set_contains_set_cover}
    Given a MEG-set $M'$ of $G$, we can compute in polynomial time a MEG-set $M$ of $G$ such that $|M|\leq |M'|$ and, for every $\ell=1,\ldots,k$, the set $\mathcal{S}_\ell=\{S_j \in \mathcal{S} \mid S_{j,\ell} \in M\}$ is a set cover of $\mathcal{I}$. 
\end{lemma}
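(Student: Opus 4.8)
The plan is to first reduce to the case of a \emph{minimal} MEG-set and then to replace item-vertices by set-vertices in a size-preserving way, so that each copy of $H$ induces a set cover. Concretely, starting from $M'$ I would repeatedly delete a vertex $v$ whenever $M' \setminus \{v\}$ is still a MEG-set. Testing whether a fixed candidate set is a MEG-set takes polynomial time (for each edge, check every pair of candidate vertices against the shortest-path condition), and each deletion strictly decreases the cardinality, so after at most $|M'|$ rounds this yields a minimal MEG-set $M''$ with $|M''| \le |M'|$ in polynomial time. By \Cref{lemma:leaf_neighbor} and the discussion following it, $M''$ contains none of the vertices $y_i$ or $z_j$, so $M'' = L \cup \bigcup_{\ell=1}^k \big(M'' \cap V(H_\ell)\big)$.

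Next I would turn each copy into a set cover by a counting argument. Fix a copy index $\ell$ and let $V_\ell = M'' \cap V(H_\ell)$, partitioned into its item-vertices $A_\ell$ and set-vertices $B_\ell$. Put $\mathcal{T}_\ell = \{S_j \mid S_{j,\ell} \in B_\ell\}$ and let $U_\ell \subseteq X$ be the items \emph{not} covered by $\mathcal{T}_\ell$. For each $x_i \in U_\ell$, no set-vertex $S_{j,\ell}$ with $x_i \in S_j$ lies in $M''$, so \Cref{lemma:what_monitors_minimal_meg_set} forces $x_{i,\ell} \in A_\ell$; since the assignment $x_i \mapsto x_{i,\ell}$ is injective, this gives $|U_\ell| \le |A_\ell|$. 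Choosing, for each $x_i \in U_\ell$, an arbitrary set $S_j \ni x_i$ (which exists since every item belongs to at least one set) and adding it to $\mathcal{T}_\ell$ produces a set cover $\mathcal{S}_\ell$ of $\mathcal{I}$ with $|\mathcal{S}_\ell| \le |\mathcal{T}_\ell| + |U_\ell| \le |B_\ell| + |A_\ell| = |V_\ell|$.

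Finally I would assemble the output. Define $M = L \cup \{S_{j,\ell} \mid S_j \in \mathcal{S}_\ell,\ 1 \le \ell \le k\}$. Each $\mathcal{S}_\ell$ being a set cover, \Cref{lemma:set_cover_induces_meg_set} immediately yields that $M$ is a MEG-set, and by construction $\{S_j \in \mathcal{S} \mid S_{j,\ell} \in M\} = \mathcal{S}_\ell$ is a set cover for every $\ell$, as required. For the cardinality bound, summing the per-copy estimate gives $|M| = |L| + \sum_{\ell=1}^k |\mathcal{S}_\ell| \le |L| + \sum_{\ell=1}^k |V_\ell| = |L| + |M'' \setminus L| = |M''| \le |M'|$. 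All the steps above are clearly computable in polynomial time.

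The step I expect to demand the most care is the cardinality bookkeeping: the inequality $|U_\ell| \le |A_\ell|$ is exactly what makes the replacement size-preserving, and it rests on reading \Cref{lemma:what_monitors_minimal_meg_set} correctly—namely that an item left uncovered in copy $\ell$ certifies the presence of its own item-vertex $x_{i,\ell}$ in $M''$, furnishing an injection from uncovered items to item-vertices that can be charged against. Everything else is a direct application of the lemmas already established.
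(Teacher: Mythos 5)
Your proof is correct and follows essentially the same route as the paper: reduce $M'$ to a minimal MEG-set $M''$, use \Cref{lemma:what_monitors_minimal_meg_set} (together with \Cref{lemma:leaf,lemma:leaf_neighbor}) to trade item-vertices for covering set-vertices, and conclude that the result is a MEG-set via \Cref{lemma:set_cover_induces_meg_set}. The only difference is cosmetic: the paper replaces each $x_{i,\ell} \in M''$ directly by an arbitrary covering set-vertex $S_{j,\ell}$, whereas you discard all item-vertices and re-add covering sets only for the items left uncovered, charging them injectively against the item-vertices forced into $M''$ --- both give $|M| \le |M''| \le |M'|$ and the same set-cover property.
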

\begin{proof}
    Let $M''$ be a minimal MEG-set of $G$ that is obtained from $M'$ by possibly discarding some of the vertices. Clearly $|M''|\leq |M'|$ and $M''$ can be computed in polynomial time. Moreover, by \Cref{lemma:what_monitors_minimal_meg_set}, for every $i=1,\ldots,\eta$ and every $\ell=1,\ldots,k$, $M''$ contains $x_{i,\ell}$ or some $S_{j,\ell}$ such that $S_j$ covers $x_i$. We compute $M$ from $M''$ by replacing each $x_{i,\ell} \in M''$ with $S_{j,\ell}$, where $S_j \in \mathcal{S}$ is any set that covers $x_i$. As a consequence, for every $\ell=1,\ldots, k$, the set $\mathcal{S}_\ell=\{S_j \in \mathcal{S} \mid S_{j,\ell} \in M\}$ is a set cover of $\mathcal{I}$.
    Moreover, since $M''$ contains all vertices in $L$ by \Cref{lemma:leaf}, so does $M$. Then, \Cref{lemma:set_cover_induces_meg_set} implies that $M$ is a MEG-set of $G$.
\end{proof}

\begin{lemma}
\label{lemma:inapx}
Let $\varepsilon > 0$ be a constant of choice.
Any polynomial-time $(\alpha \ln n)$-approximation algorithm for the minimum MEG-set problem, where $\alpha>0$ is a constant, implies the existence of a polynomial-time $((2\alpha+\varepsilon) \ln N)$-approximation algorithm for \textsc{Set Cover}.
\end{lemma}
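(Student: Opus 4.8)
The plan is to turn a hypothetical polynomial-time $(\alpha \ln n)$-approximation algorithm $\mathcal{A}$ for the minimum MEG-set problem into a Set Cover approximation algorithm. Given a Set Cover instance $\mathcal{I}$, I would first build the graph $G$ of the reduction for a value of $k$ to be fixed later, run $\mathcal{A}$ on $G$ to obtain a MEG-set $M'$ with $|M'| \le (\alpha \ln n)\,\mathrm{OPT}_{\mathrm{MEG}}$, and then invoke \Cref{lemma:meg_set_contains_set_cover} to convert $M'$ into a MEG-set $M$ with $|M| \le |M'|$ together with $k$ set covers $\mathcal{S}_1,\ldots,\mathcal{S}_k$ of $\mathcal{I}$ satisfying $|M| = N + \sum_{\ell=1}^k |\mathcal{S}_\ell|$ (here $|L|=N$, and by minimality $M\setminus L$ consists only of the set-vertices selected in the $k$ copies). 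The algorithm then returns the smallest of these covers.

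Next I would pin down the two quantities relating the optima. For the upper bound on $\mathrm{OPT}_{\mathrm{MEG}}$, taking each $\mathcal{S}_\ell$ to be an optimal set cover in \Cref{lemma:set_cover_induces_meg_set} produces a MEG-set of size $N + k\,\mathrm{OPT}_{\mathrm{SC}}$, hence $\mathrm{OPT}_{\mathrm{MEG}} \le N + k\,\mathrm{OPT}_{\mathrm{SC}}$. For the returned solution, a simple averaging argument gives
\[
\min_{1\le \ell \le k} |\mathcal{S}_\ell| \;\le\; \frac{1}{k}\sum_{\ell=1}^k |\mathcal{S}_\ell| \;=\; \frac{|M| - N}{k} \;\le\; \frac{|M'| - N}{k} \;\le\; \frac{(\alpha \ln n)\,\mathrm{OPT}_{\mathrm{MEG}} - N}{k}.
\]
Substituting the bound on $\mathrm{OPT}_{\mathrm{MEG}}$ and simplifying, the size of the returned cover is at most $(\alpha \ln n)\,\mathrm{OPT}_{\mathrm{SC}} + \frac{(\alpha \ln n - 1)N}{k}$.

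The heart of the argument—and the source of the factor $2$—is the choice of $k$. Since $n = (k+2)N$, taking $k$ \emph{linear} in $N$ forces $n = \Theta(N^2)$ and thus $\ln n = 2\ln N + O(1)$, which turns the leading term $\alpha \ln n$ into $2\alpha \ln N + O(1)$. Concretely I would set $k = \Theta(\alpha N / \varepsilon)$, which is admissible since it satisfies $2 \le k = O(\mathrm{poly}(N))$. This choice simultaneously damps the additive error: using $\frac{(\alpha \ln n - 1)N}{k} \le \frac{\alpha N \ln n}{k}$ together with $\ln n = O(\ln N)$, the additive term is bounded by a small multiple of $\varepsilon \ln N$, and since $\mathrm{OPT}_{\mathrm{SC}} \ge 1$ it contributes at most $\varepsilon \ln N$ to the approximation ratio after dividing by $\mathrm{OPT}_{\mathrm{SC}}$. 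Altogether the ratio becomes $2\alpha \ln N + O_\varepsilon(1) + O(\varepsilon \ln N)$.

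Finally I would absorb the leftover constant $O_\varepsilon(1)$ into $\varepsilon \ln N$, which is valid for every $N$ larger than a constant $N_0=N_0(\alpha,\varepsilon)$; for the finitely many instances with $N \le N_0$ one has at most $N_0$ sets and can solve Set Cover exactly in constant time by brute force. Re-scaling $\varepsilon$ at the outset (running the construction with $\varepsilon/2$) then yields the claimed $((2\alpha + \varepsilon)\ln N)$-approximation. The only genuinely delicate point is the balancing of $k$: it must be large enough to push the additive error $\frac{\alpha N \ln n}{k}$ below $\varepsilon \ln N$, yet this very requirement forces $k = \Theta(N)$ and hence $\ln n \approx 2\ln N$, which is precisely what produces the unavoidable factor $2$ in front of $\alpha$.
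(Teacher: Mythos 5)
Your proof is correct and takes essentially the same approach as the paper: build $G$ with $k = \Theta(N)$ copies so that $\ln n \approx 2\ln N$, convert the approximate MEG-set into $k$ set covers via \Cref{lemma:meg_set_contains_set_cover}, bound the optimal MEG-set by $N + k h^*$ via \Cref{lemma:set_cover_induces_meg_set}, and output the smallest cover by averaging. The only differences are bookkeeping: the paper picks $k = N-2$ so that $n = N^2$ exactly (hence $\ln n = 2\ln N$ with no constant term to absorb) and suppresses the additive $O(\alpha \ln N)$ error by assuming $h^* \ge \frac{4\alpha}{\varepsilon}$ (solving the remaining instances directly), whereas you inflate $k$ to $\Theta(\alpha N/\varepsilon)$, use $\mathrm{OPT} \ge 1$, and absorb the resulting constants by restricting to $N \ge N_0(\alpha,\varepsilon)$.
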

\begin{proof}

Given an instance $\mathcal{I}=\langle X, \mathcal{S}\rangle$ of \textsc{Set Cover} and let $h^*$ be the size of an optimal set cover of $\mathcal{I}$.
In the rest of the proof we assume w.l.o.g.\ that $N \ge 4$ and $h^* \ge \frac{4 \alpha}{\varepsilon}$.
Indeed, if any of the above two conditions does not hold, we can solve $\mathcal{I}$ in constant time.

We now construct the graph $G$ with $n=  (k+2) N \le N^2$ vertices by making $k = N-2$ copies of $H$.
Next, we run the $(\alpha \ln n)$-approximation algorithm to compute a MEG-set $M'$ of $G$, and we use \Cref{lemma:meg_set_contains_set_cover} to find a MEG-set $M$ with $|M| \le |M'|$ that contains $k$ set covers $\mathcal{S}_1,\ldots,\mathcal{S}_k$ in polynomial time. Among these $k$ set covers, we output one $\mathcal{S}'$ of minimum size. 

To analyze the approximation ratio of the above algorithm, let $M^*$ be an optimal MEG-set of $G$. 
\Cref{lemma:set_cover_induces_meg_set} ensures that $|M^*| \leq |L|+kh^* = N + kh^*$, and hence
\[
    |M| \le |M'| \le \alpha (N+kh^*) \ln n = \alpha (N+kh^*) \ln N^2=
     2 \alpha k h^* \ln N + 2 \alpha N \ln N.
\]
Therefore we have:
\begin{align*}
    |\mathcal{S}'| & \le \frac{|M|}{k} 
    \le 2 \alpha h^* \ln N + \frac{2 \alpha  N\ln N}{k}  \\
    & \le  2 \alpha h^* \ln N + 4 \alpha  \ln N 
    =\left(2 \alpha + \frac{4\alpha}{h^*} \right) h^* \ln N
    \le (2\alpha + \varepsilon) h^* \ln N.  \qedhere
\end{align*}
\end{proof}

 Let $\gamma$ be any positive constant. Since \textsc{Set Cover} cannot be approximated in polynomial time within a factor of $(1 - \gamma) \ln |\mathcal{I}|$, unless $\p = \np$ \cite{DinurS14}, and since an invocation of \Cref{lemma:inapx} with  $\alpha = \frac{1}{2} - \gamma$ and $\varepsilon = \gamma$
 shows that any polynomial-time $( (\frac{1}{2} - \gamma) \ln n)$-approximation algorithm for the minimum MEG-set problem can be turned into a polynomial-time approximation algorithm for \textsc{Set Cover} with an approximation ratio of $( \frac{1}{2} - \gamma) \ln N \le ( \frac{1}{2} - \gamma) \ln |\mathcal{I}|$, we have:
 
\begin{theorem}
The minimum MEG-set problem cannot be approximated in polynomial time within a factor of $c \ln n$, for any constant $c < \frac{1}{2}$, unless \p = \np.
\end{theorem}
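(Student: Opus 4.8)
The plan is to obtain the theorem as an immediate consequence of \Cref{lemma:inapx} combined with the Dinur--Steurer inapproximability of \textsc{Set Cover} \cite{DinurS14}. I would argue by contraposition. Fix a constant $c$ with $0 < c < \frac{1}{2}$ and suppose, for contradiction, that a polynomial-time $(c \ln n)$-approximation algorithm for the minimum MEG-set problem exists. Write $\gamma = \frac{1}{2} - c$, so that $\gamma > 0$, and prepare to feed this hypothetical algorithm into the reduction machinery already established.

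The first step is to instantiate \Cref{lemma:inapx} with the constants $\alpha = c = \frac{1}{2} - \gamma$ and $\varepsilon = \gamma$; the lemma then transforms the assumed MEG-set approximation into a polynomial-time $((2\alpha + \varepsilon)\ln N)$-approximation algorithm for \textsc{Set Cover}. The second step is a one-line evaluation of the exponent of this ratio: $2\alpha + \varepsilon = 2\left(\frac{1}{2}-\gamma\right) + \gamma = 1 - \gamma$, yielding a $((1-\gamma)\ln N)$-approximation. The third step invokes the inequality $|\mathcal{I}| \ge N$ recorded when $H$ was constructed, whence $\ln N \le \ln |\mathcal{I}|$ and the approximation ratio is bounded above by $(1-\gamma)\ln|\mathcal{I}|$. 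Since \cite{DinurS14} forbids any such approximation unless $\p = \np$, this furnishes the desired contradiction and establishes the claim for every $c \in (0, \frac{1}{2})$.

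Essentially all of the difficulty has already been discharged by the preceding lemmas, so no genuine obstacle remains at this stage; the only point requiring attention is the bookkeeping of the constants. For the fixed $c$ one checks that the choices $\alpha = c$ and $\varepsilon = \gamma = \frac{1}{2}-c$ are both legitimate positive constants as demanded by \Cref{lemma:inapx}. The crucial quantitative ingredient, which I would highlight, is the choice $k = N - 2$ made inside that lemma: it keeps $n \le N^2$ and hence $\ln n \le 2\ln N$, and this is precisely the source of the factor $2$ that converts the threshold $\frac{1}{2}$ for the MEG-set problem into the matching threshold $1$ for \textsc{Set Cover}. The correctness lemmas (\Cref{lemma:set_cover_induces_meg_set} bounding the optimum by $N + k h^*$, and \Cref{lemma:meg_set_contains_set_cover} extracting $k$ set covers from any MEG-set) are what make this translation of the approximation factor lossless beyond that unavoidable factor of $2$.
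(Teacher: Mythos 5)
Your proposal is correct and follows essentially the same route as the paper: both invoke \Cref{lemma:inapx} with $\alpha = \frac{1}{2}-\gamma$ and $\varepsilon = \gamma$, use $N \le |\mathcal{I}|$ to pass from $\ln N$ to $\ln|\mathcal{I}|$, and conclude via the Dinur--Steurer bound for \textsc{Set Cover}. Your explicit computation $2\alpha + \varepsilon = 1-\gamma$ is the right one (the paper's text states the resulting \textsc{Set Cover} ratio as $(\frac{1}{2}-\gamma)\ln N$, which appears to be a typo for $(1-\gamma)\ln N$), so your write-up is, if anything, slightly cleaner on this point.
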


\bibliographystyle{unsrt} 
\bibliography{references}

\end{document}